\newtheorem{theorem}{Theorem}
\newtheorem{observation}{Observation}
\newtheorem{definition}{Definition}
\begin{document}

\title{Cooperative Product Games}
\author{David Rosales}

\date{}

\maketitle

\begin{abstract}

I introduce cooperative product games (CPGs), a cooperative game where every player has a weight, and the value of a coalition is the product of the weights of the players in the coalition.
I only look at games where the weights are at least $2$.

I show that no player in such a game can be a dummy. I show that the game is convex, and therefore always has  a non-empty core. I provide a simple method for finding a payoff vector in the core. 
\end{abstract}

\section{Introduction}
\label{l_introduction}

Game theory deals with several players who make strategic choices, knowing that the choices of others affect their own utility. It has implications in economics, commerce, law and business. 

Some aspects of interaction between players are cooperation and negotiation, which are covered by cooperative game theory. When players are selfish, they can only cooperate and make a stable coalition if they can find reasonable ways to divide the gained utility. Cooperative game theory takes a model of such an interaction, and uses solution concepts to characterize which distributions of utility would be agreed on. 

I use the framework of transferable utility cooperative games, where players can form agreements about dividing the utility. I propose one simple model of cooperation, where every player has a weight, which captures the factor by which the utility of a coalition is multiplied when the player is added to that coalition. Therefore, the value of a coalition of players who cooperate together is the product of all these weights factors. I call these games Cooperative Product Games. 

I show that these games have a certain convexity property, which results in these games always having a stable utility allocation, as captured by the Core~\cite{oai:xtcat.oclc.org:OCLCNo/ocm19736643} solution concept. I present some basic concepts and notation in Section~\ref{l_notation}, formally describe the game and the results in Section~\ref{l_cpd}, describe related work and models in Section~\ref{l_related_work} and give a conclusion in Section~\ref{l_conc}.

\section{Basic Concepts and Notation}
\label{l_notation}

Transferable utility cooperative games are a tool for analyzing cooperation between players. They are defined using a characteristic function, which describes the value for every subset of players. The important question cooperative game theory asks is how the players will divide the value they make when they work together. 

\begin{definition}
\label{l_tu_game}
A {\bf transferable utility cooperative game} has a set $P=\{a_1,\ldots,a_n\}$ of players, characteristic function from any player subset to a value: $v : 2^P \rightarrow \mathbb{R}$. This function tells us how much utility any subset of the players make when they work together. 
\end{definition}

Mostly researchers only look at games which are monotonic and super-additive. A game is called \emph{monotonic} if for any $C' \subset C\subseteq P$ we have that: $v(C') \leq v(C)$, and is called \emph{super-additive} if for any \emph{disjoint} coalitions $A,B \subset P$ we have $v(A) +v (B) \leq v(A \cup B)$. If the game is super-additive it is always good for two sub-coalitions to merge, because they make more utility overall, so in the end we get the grand coalition of all the players. 

We call a player a dummy if they never contribute anything to the value of any coalition. 

\begin{definition}
\label{l_dummy}
A {\bf dummy player} is a player $a_i$ such that there does not exist any coalition $C$ such that $v(C \cup \{a_i\}) \gneq v(C)$.
\end{definition}

The characteristic function only tells us how much a coalition makes, but not how they will divide it between the members of the coalition. An agreement between the players about dividing the utility is expressed as an imputation. 

\begin{definition}
\label{l_def_imputation}
An {\bf imputation} $(p_1,\ldots ,p_n)$ is a division of the utility of the grand coalition between the players, i.e. $p_i \in \mathbb{R}$, such that $\sum_{i=1}^n p_i = v(I)$.
\end{definition}

The payoff of player $a_i$ is $p_i$. I use $p(C)$ to denote the payoff of coalition $C$: $p(C) := \sum_{i \in \{ i | a_i \in C \} } p_i$.  

\subsection{The Core}

There are many possible imputations in a game, so it is important to give tools to decide which of them the players would choose to use to divide the utility. 

Generally we expect that for any player $a_i \in C$, we should have: $p_i \geq v(\{a_i\})$. If this does not happen, some player can get a better utility for himself by separating from the rest of the players and working alone. The same reasoning is also true for coalitions. A coalition $C$ stops (sometimes called ``blocks'') the imputation $(p_1,\ldots,p_n)$ if $p(C) < v(C)$, because the members of $C$ can deviate from the original coalition to get a utility $v(C)$, which is more than they are paid under the imputation. 

The difference between a coalition's value under the characteristic function, and the total payment of the coalition under an imputation is called the excess of that coalition in that imputation. 

\begin{definition}
\label{l_def_excess}
Given an imputation $p = (p_1,\ldots ,p_n)$, the {\bf excess} of a coalition $C$ is $e(C) = v(C) - p(C)$. 
\end{definition}

\begin{definition}
\label{l_def_blocking} A coalition $C$ {\bf stops} the imputation $p=(p_1,\ldots,p_n)$ if its excess under this imputation is strictly positive, $e(C) > 0$. 
\end{definition}

The main solution concept in cooperative game theory that is based on the excess of coalitions is the core~\cite{oai:xtcat.oclc.org:OCLCNo/ocm19736643}.

\begin{definition}
\label{l_def_core}
The {\bf core} of a cooperative game is the set of all imputations $(p_1,\ldots,p_n)$ that are not blocked by any coalition, so that for any coalition $C$, we have $p(C) \geq v(C)$.
\end{definition}

If there are imputations in the core of a game, we are interested in algorithms that find such an imputation. However, for some games no such an imputation exist, and we say the core of the game is empty. One condition that guarantees that the core of a game is not empty is convexity. 

\begin{definition}
\label{l_def_convex_games}
A game is convex if for any $A,B \subseteq P$ we have $v(A \cup B) \geq v(A) + v(B) - v(A \cap B)$. 
\end{definition}

Earlier work showed that for convex games, the core is always non-empty~\cite{shap71,Ichiishi81}. One way to find core imputations in convex games is to look at player permutations. 

We denote by $\pi$ a permutation of the players, and by $\Pi$ the set of all possible player permutations. We can now look at how much utility every player adds in a permutation, which is the difference between the utility that all the players before him in that permutation have, and the utility that they have including him. Formally, given permutation $\pi \in \Pi = (a_{x_1},\ldots,a_{x_n})$ (i.e. $x_1$ is the index of the first player in the permutation, $x_2$  is the index of the second index in the permutation and so on), the marginal contribution of player $a_{x_i}$ is defined as follows. 
For $a_{x_1}$, the first player in the permutation $\pi$, we define the marginal contribution in $\pi$ to be $m^{\pi}_{x_1} = v(\{ a_{x_i} \})$. For any other player $a_{x_i}$, we define the marginal contribution of $a_{x_i}$ in $\pi$ to be:
$$m^{\pi}_{x_i} = v( \{ a_{x_1}, a_{x_2}, \ldots, a_{x_i} \} ) - v( \{ a_{x_1} ,a_{x_2},\ldots, a_{x_{i-1}} \} )$$ 

Weber~\cite{Weber77} defined the marginal contribution vector given a permutation $\pi = (a_{x_1},\ldots,a_{x_n})$ as: $x_{\pi} = (m^{\pi}_{x_1},  m^{\pi}_{x_2}, \ldots, m^{\pi}_{x_n})$. He studied the set of all convex combination of marginal contribution vectors, which is now called the Weber Set, and showed that in convex games this set is also the set of all core imputations in the game. 

\section{Cooperative Product Games}
\label{l_cpd}

Cooperative Product Game, CPGs for short, are cooperative game where every player has a weight, and the characteristic function is the product of the weights of the players in the coalition. 

\begin{definition}
\label{l_def_cpg}
A CPG is a game over the player set $P=(a_1,\ldots,a_n)$, where every player $a_i$ has a weight $w_i \in \mathbb{R}^+$. We define $v(\emptyset)=0$  and the value for any other coalition $C$ is:
$$v(C) = \prod_{i \in C} w_i$$
\end{definition}

CPGs capture synergies between players in a simple way. Any player added to a coalition multiplies its value by a certain factor, which depends on the player. 

In this work, I only look at games where the weight $w_i$ of any player $a_i$ is an integer that is at least $2$ (i.e. $\forall_i w_i \geq 2$), so I first explain why this restriction is needed. Using integer weights make the representation and simpler, and it is easier to look at the computational complexity of algorithms. If an player $a_i$ has a weight $w_i < 1$, then adding it to a coalition actually lowers its value, because it multiplies its value by a fraction. This makes the game non-monotonic. So we want integer weights that are at least $1$. But if a game has only two players with weights that are both exactly one, the core is empty. To see this, consider players $1,2$ with weights $w_1=w_2=0$. Then we have $v(\{a_1\})=1$, $v(\{a_2\})=1$, and $v(\{a_1,a_2\})=1$. So, there is a total utility of $1$ to divide, but both player $1$ and player $2$ need to get all of it for the core to be non-empty. 

I show that when the weights in a CPG are at least $2$, the game is monotonic and that there are no dummy players. 

\begin{observation}CPGs are monotonic. 
\end{observation}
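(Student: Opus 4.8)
The plan is to unfold the definition of monotonicity and reduce it to the elementary observation that multiplying a product by extra factors, each at least $2$, cannot decrease it. Fix coalitions $C' \subset C \subseteq P$; the goal is to show $v(C') \leq v(C)$.

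First I would dispose of the degenerate case $C' = \emptyset$. Here $v(C') = 0$ by Definition~\ref{l_def_cpg}, while $C$ is non-empty (it properly contains $\emptyset$), so $v(C) = \prod_{i \in C} w_i$ is a product of integers each at least $2$, hence $v(C) \geq 2 > 0 = v(C')$. For the main case $C' \neq \emptyset$, I would write $C = C' \cup D$ with $D := C \setminus C'$, so that $C'$ and $D$ are disjoint. The product defining $v$ then factors as $v(C) = \prod_{i \in C'} w_i \cdot \prod_{i \in D} w_i = v(C') \cdot \prod_{i \in D} w_i$. Since every $w_i \geq 2 \geq 1$, the factor $\prod_{i \in D} w_i$ is at least $1$ (indeed at least $2$ when $D \neq \emptyset$, with the empty product equal to $1$), and since $v(C') = \prod_{i \in C'} w_i > 0$, multiplying it by a number $\geq 1$ can only increase it, giving $v(C) \geq v(C')$.

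There is essentially no obstacle here; the only point requiring a little care is the boundary behaviour $v(\emptyset) = 0$, which is why that case is split off — everywhere else monotonicity is immediate from positivity of the weights. In fact the same computation yields strict monotonicity, $v(C') \lneq v(C)$ whenever $C' \subsetneq C$, since then $D \neq \emptyset$ forces $\prod_{i \in D} w_i \geq 2$; this stronger statement will be convenient when ruling out dummy players.
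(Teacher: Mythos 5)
Your proof is correct, and it rests on the same elementary fact as the paper's: extra weight factors, each at least $2$, cannot decrease a positive product. The route differs slightly in how that fact is deployed. The paper proves the single-increment statement --- adding one player $a_i \notin C$ gives $v(C \cup \{a_i\}) = w_i \cdot v(C) \geq v(C)$ --- and leaves implicit the chaining from that to the definitional claim $v(C') \leq v(C)$ for arbitrary $C' \subset C$; it also dismisses the empty-coalition base case with a remark that it is ``certainly true.'' You instead prove the definition directly, factoring $v(C) = v(C') \cdot \prod_{i \in D} w_i$ over the set difference $D = C \setminus C'$, which removes the need for any iteration, and you treat the boundary $v(\emptyset) = 0$ explicitly, which is exactly the point the paper glosses over. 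Your closing remark that the argument in fact gives strict monotonicity whenever $C' \subsetneq C$ (since $D \neq \emptyset$ forces $\prod_{i \in D} w_i \geq 2$) is a genuine small bonus: it immediately implies the paper's next observation that no player is a dummy, which the paper instead proves separately via the empty coalition. In short, your version is a bit tighter and more complete than the paper's; there is no gap.
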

\begin{proof}
I first show that if you add an player to a coalition, its value increases. This is certainly true for adding an player for the empty coalition $\emptyset$. Let $C$ be a coalition with at least one player. I know that $v(C) \geq 2$ because all the weights are at least $2$, so the value is a product of many factors, each of which is at least $2$. Now consider adding a player $a_i \notin C$. Denote $v(C)=\prod_{i \in C} w_i = x$. So $v(C \cup \{a\}) = w_i  \cdot x \geq 2 \cdot x \geq x = v(C)$. 
\end{proof}

\begin{observation}In a CPG, none of the players is a dummy player.
\end{observation}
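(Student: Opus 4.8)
The plan is to produce, for each player $a_i$, an explicit coalition $C$ witnessing that $a_i$ makes a strictly positive marginal contribution, i.e. $v(C \cup \{a_i\}) > v(C)$; by Definition~\ref{l_dummy} this shows $a_i$ is not a dummy. The most economical choice of witness is the empty coalition: I would take $C = \emptyset$, for which $v(\emptyset) = 0$ while $v(\{a_i\}) = w_i$, and then invoke the standing assumption $w_i \geq 2$ to conclude $v(\{a_i\}) = w_i \geq 2 > 0 = v(\emptyset)$, so the inequality is strict. Since $a_i$ was arbitrary, no player is a dummy.

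If one prefers a witness that is a nonempty coalition (say because the empty-coalition case feels like a boundary case, given $v(\emptyset)=0$ is defined separately from the product formula), I would instead pick any $C \subseteq P \setminus \{a_i\}$ with $C \neq \emptyset$ and argue directly from the product structure: $v(C \cup \{a_i\}) = w_i \cdot \prod_{j \in C} w_j = w_i \cdot v(C)$, and since every weight is at least $2$ we have $v(C) \geq 2 > 0$, hence $v(C \cup \{a_i\}) = w_i \cdot v(C) \geq 2\, v(C) > v(C)$. This is essentially the same computation already carried out in the proof that CPGs are monotonic, just with the conclusion strengthened to a strict inequality using $w_i \geq 2$ rather than $w_i \geq 1$.

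There is no real obstacle here: the only thing to be slightly careful about is making sure the chosen witnessing coalition $C$ does not already contain $a_i$ (so that $v(C \cup \{a_i\})$ genuinely reflects adding $a_i$), and that the inequality is \emph{strict}, which is exactly where the hypothesis $w_i \geq 2$ (as opposed to $w_i \geq 1$) does its work — this is also the reason the two-player all-ones example in the discussion above fails to have the dummy-freeness we want. Accordingly I would keep the proof to two or three lines, using $C = \emptyset$ as the witness and citing the assumption $w_i \geq 2$.
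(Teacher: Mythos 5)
Your primary argument (take $C=\emptyset$, note $v(\emptyset)=0$ while $v(\{a_i\})=w_i\geq 2$, so the marginal contribution is strictly positive) is exactly the proof given in the paper, and it is correct. The alternative you sketch with a nonempty witness is also fine but unnecessary; the two-line empty-coalition version suffices.
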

\begin{proof}
Let $a_i$ be a player in a CPG. By definition, $w_i \geq 2$. Consider the empty coalition $C=\emptyset$. We have, by definition, $v(C)=0$ and $v(C \cup \{a_i\})=w_i \geq 2$, so $v(C \cup \{a_i\}) \gneq v(C)$. 
\end{proof}

\subsection{Convexity And The Core Of CPGs}
\label{l_core}

The goal of the core is to ``solve'' the game, and find reasonable divisions of the total utility between the players. I first show that such distributions always exist. I show this by showing that CPGs are convex.

\begin{theorem} CPGs are convex games.
\label{l_thm_cpg_convex}
\end{theorem}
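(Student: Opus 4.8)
The plan is to reduce the convexity inequality $v(A\cup B)\geq v(A)+v(B)-v(A\cap B)$ to an elementary fact about products of numbers that are at least $2$. Write $X=A\cap B$ and split $A=X\cup A'$ and $B=X\cup B'$, where $A'=A\setminus B$ and $B'=B\setminus A$ are disjoint from each other and from $X$, so that $A\cup B=X\cup A'\cup B'$. Let $\alpha=\prod_{i\in A'}w_i$, $\beta=\prod_{i\in B'}w_i$, and $\xi=\prod_{i\in X}w_i$, using the convention that an empty product equals $1$. Each of $\alpha,\beta,\xi$ is then at least $1$, and is at least $2$ whenever the corresponding index set is non-empty.

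First I would treat the main case, $A\cap B\neq\emptyset$. Here $v(X)=\xi\geq 2$, $v(A)=\xi\alpha$, $v(B)=\xi\beta$, and $v(A\cup B)=\xi\alpha\beta$, so the claim becomes $\xi\alpha\beta\geq \xi\alpha+\xi\beta-\xi$. Dividing by $\xi>0$, this is equivalent to $\alpha\beta-\alpha-\beta+1\geq 0$, i.e. $(\alpha-1)(\beta-1)\geq 0$, which holds because $\alpha\geq 1$ and $\beta\geq 1$.

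The only real obstacle is the case $A\cap B=\emptyset$, because the convention $v(\emptyset)=0$ (rather than $1$) breaks the clean multiplicative factoring above, so it must be argued separately. In that case $v(A\cap B)=0$ and convexity reduces to super-additivity, $v(A\cup B)\geq v(A)+v(B)$ for disjoint $A,B$. If one of $A,B$ is empty this is an equality; if both are non-empty it reads $\alpha\beta\geq\alpha+\beta$, equivalently $(\alpha-1)(\beta-1)\geq 1$, which holds since now $\alpha\geq 2$ and $\beta\geq 2$. Combining the two cases establishes convexity for all $A,B\subseteq P$, and then non-emptiness of the core follows from the cited result on convex games.
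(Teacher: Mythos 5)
Your proof is correct and follows essentially the same route as the paper: split $A$ and $B$ into the common part $X=A\cap B$ and the disjoint remainders, and compare the resulting products. In fact your algebra is more careful than the paper's: the paper asserts that $v(A\cup B)-\bigl(v(A)+v(B)-v(A\cap B)\bigr)$ equals $x\geq 2$, whereas the difference is really $\xi(\alpha-1)(\beta-1)$, which can be $0$ (for instance when $A\subseteq B$), so your factorization $(\alpha-1)(\beta-1)\geq 0$, together with your explicit treatment of the disjoint case where $v(\emptyset)=0$ forces a separate super-additivity argument, is the cleaner and fully correct version of the same argument.
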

\begin{proof}
I need to show that for any two coalitions $A,B \subseteq P$ we have $v(A \cup B) \geq v(A) + v(B) - v(A \cap B)$. 

Consider two such coalitions $A,B$, and denote the intersection as $X = A \cap B$. If $A = \emptyset$ or $B = \emptyset$ then the requirements holds trivially. If $X = \emptyset$ then $v(A \cup B) = \prod_{i \in A} w_i \prod_{j \in B} w_j = v(A) \cdot v(B)$ so we are done. 

If $X \neq \emptyset$ then denote $v(X) = \prod_{i \in X} w_i = x$.  Denote $A' = A \setminus X$ and $B' = B \setminus X$, and denote $a' = \prod_{i \in A'} w_i$ and $b' = \prod_{j \in B'} w_j$. 

Since all the weights are at least $2$ we have $x \geq 2$.
Now, $v(A)=v(A' \cup X) = \prod_{i \in A'} w_i \prod_{x \in X} w_x = a' \cdot x$ and $v(B)=v(B' \cup X) = \prod_{j \in B'} w_j \prod_{x \in X} w_x = b' \cdot x$. I have $v(A \cup B) = v(A' \cup B' \cup X) = \prod_{i \in A'} w_i \prod_{j \in B'} w_j \prod {x \in X} w_X = a' \cdot b' \cdot x$.

We also have $v(A \cap B) = v(X) = x$. Thus $v(A) + b(B) - v(A \cap B) = a' \cdot x + b' \cdot x - x$, and $v(A \cup B) = a' \cdot b' \cdot x$, so $v(A \cup B) - (v(A) + b(B) - v(A \cap B)) = x \geq 2 > 0$ as required. 
\end{proof}

\begin{theorem}Any CPG has a non-empty core.
\end{theorem}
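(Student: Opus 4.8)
The cleanest route is simply to combine what we already have: Theorem~\ref{l_thm_cpg_convex} says every CPG is convex, and convex games are known to have non-empty cores~\cite{shap71,Ichiishi81}, so the statement follows at once. Since the abstract promises an explicit way to produce a core allocation, however, I would give a constructive proof that also delivers the method, keeping the citation only as a shortcut for the final verification.

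First I fix an arbitrary permutation $\pi = (a_{x_1},\ldots,a_{x_n})$ of the players and take the associated marginal contribution vector $x_\pi = (m^\pi_{x_1},\ldots,m^\pi_{x_n})$ defined in Section~\ref{l_notation}. In a CPG this vector has a clean closed form: writing $W_k = \prod_{j=1}^k w_{x_j}$ with $W_0 = 1$, we get $m^\pi_{x_1} = W_1 = w_{x_1}$ and $m^\pi_{x_i} = W_i - W_{i-1} = W_{i-1}(w_{x_i}-1)$ for $i \ge 2$. Because every $w_{x_i} \ge 2$, each coordinate is a positive integer, and a telescoping sum gives $\sum_i m^\pi_{x_i} = W_n = \prod_i w_i = v(P)$, so $x_\pi$ is a genuine imputation; it is computed in linear time in $n$, which is the promised simple method.

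It then remains to check that $x_\pi$ is blocked by no coalition, i.e.\ $p(C) \ge v(C)$ for every $C \subseteq P$. Quoting Shapley's characterization — in a convex game every marginal contribution vector lies in the core — together with Theorem~\ref{l_thm_cpg_convex} closes the argument immediately. For a self-contained proof, list the members of $C$ in the order they appear in $\pi$ as $c_1,\ldots,c_m$ with weights $u_1,\ldots,u_m$; the key observations are that $m^\pi_{c_1} \ge u_1$ (directly when $c_1$ is the first player of $\pi$, and otherwise because the leading partial product in $m^\pi_{c_1}$ is already at least $2$ and $u_1 \ge 2$), while $m^\pi_{c_\ell} \ge (u_1 \cdots u_{\ell-1})(u_\ell - 1)$ for $\ell \ge 2$ (since the partial product occurring in $m^\pi_{c_\ell}$ contains at least the weights $u_1,\ldots,u_{\ell-1}$, all other factors being $\ge 1$). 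Summing these lower bounds and telescoping yields $p(C) \ge u_1 + (u_1\cdots u_m - u_1) = u_1 \cdots u_m = v(C)$.

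The only delicate step is this last verification; the closed form for $m^\pi_{x_i}$ and the telescoping identity for the total payoff are routine bookkeeping, and if one is content to cite the known fact about cores of convex games, the whole statement drops out of Theorem~\ref{l_thm_cpg_convex} with no further work.
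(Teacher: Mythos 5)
Your proposal is correct, and its first route is exactly the paper's proof of this theorem: cite Theorem~\ref{l_thm_cpg_convex} together with the known fact that convex games have non-empty cores~\cite{shap71,Ichiishi81}, and stop. The constructive part you add goes beyond what the paper does at this point: the paper defers the explicit construction to a separate, later theorem (computing the marginal contribution vector of a permutation in linear time), and even there it only invokes Weber's result~\cite{Weber77} that in convex games every marginal contribution vector lies in the core, rather than verifying core membership directly. Your self-contained verification is sound: the telescoping identity $m^{\pi}_{x_1}+\sum_{i\geq 2}(W_i-W_{i-1})=W_n=v(P)$ shows the vector is an imputation, and for a coalition $C=\{c_1,\ldots,c_m\}$ listed in $\pi$-order the bounds $m^{\pi}_{c_1}\geq u_1$ (using $W_{i-1}(u_1-1)\geq 2(u_1-1)\geq u_1$ when $c_1$ is not first) and $m^{\pi}_{c_\ell}\geq (u_1\cdots u_{\ell-1})(u_\ell-1)$ for $\ell\geq 2$ (the partial product preceding $c_\ell$ contains $u_1,\ldots,u_{\ell-1}$ and all weights are $\geq 1$) do telescope to $p(C)\geq u_1\cdots u_m=v(C)$. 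So what your version buys is independence from both Shapley's and Weber's theorems for this particular game class, at the cost of a page of bookkeeping; what the paper's two-line proof buys is brevity, at the cost of relying on the cited general results.
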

\begin{proof}
I have shown the game is convex, and convex games are known to have a non-empty core~\cite{shap71,Ichiishi81}. 
\end{proof}

I describe a method to find imputations in the core of a CPG, based on permutations.

\begin{theorem}It is possible to find a core imputation in a CPG in linear time. 
\end{theorem}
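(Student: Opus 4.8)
The plan is to build directly on the two facts already available: Theorem~\ref{l_thm_cpg_convex}, which says every CPG is convex, and the characterization recalled in Section~\ref{l_notation}, by which in a convex game every marginal contribution vector $x_\pi$ is a core imputation. Hence it is enough to fix one permutation $\pi$ of the players, compute the associated marginal contribution vector, and observe that the computation is cheap; the output is then in the core for free. I would take $\pi$ to be the identity permutation $(a_1,a_2,\ldots,a_n)$, purely to keep the notation light.

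The efficiency comes from the multiplicative form of $v$, which makes the marginal contributions telescope into a running product. Put $P_0=1$ and $P_k=\prod_{j=1}^{k}w_j=v(\{a_1,\ldots,a_k\})$ for $k\geq 1$. The first player's marginal contribution is $m_1=v(\{a_1\})=w_1=P_1$, and for $i\geq 2$,
$$m_i = v(\{a_1,\ldots,a_i\}) - v(\{a_1,\ldots,a_{i-1}\}) = P_i-P_{i-1} = P_{i-1}\,(w_i-1).$$
So the algorithm is a single left-to-right pass: maintain the partial product, output $m_1=w_1$, then for $i=2,\ldots,n$ output $P_{i-1}(w_i-1)$ and update $P_i=P_{i-1}\cdot w_i$. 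This uses $O(n)$ arithmetic operations — one multiplication and one subtraction per player — plus the unavoidable $O(n)$ work of reading the weights. That the vector $(m_1,\ldots,m_n)$ sums to $v(P)$, and is therefore an imputation, is just the telescoping identity $\sum_i m_i = v(\{a_1,\ldots,a_n\})$; membership in the core is inherited from convexity, as noted above. I might present this as an explicit routine using the \texttt{algorithm} environment, but the mathematical claim is exactly the two-line argument just given.

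The one point that genuinely needs care, and the one I would flag explicitly, is the meaning of ``linear time.'' Since every $w_i\geq 2$, the grand-coalition value $v(P)\geq 2^n$ has $\Omega(n)$ bits, so the partial products $P_k$ are not of bounded size and the individual multiplications are not truly constant-time. The clean statement is therefore in the unit-cost RAM model, or, said model-independently, ``$O(n)$ arithmetic operations on the integers involved.'' I would state the theorem in that form and add a short remark observing that even in the bit-complexity model the same one-pass algorithm runs in time polynomial in the input length, so nothing pathological is hidden. Beyond this bookkeeping there is no real obstacle: all the mathematical content already sits in the convexity theorem and the Weber/Shapley characterization of cores of convex games.
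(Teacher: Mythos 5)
Your proof is correct and follows essentially the same route as the paper: fix a permutation, exploit the multiplicative structure so the marginal contributions telescope to $m_i = P_{i-1}(w_i-1)$, and invoke convexity (Theorem~\ref{l_thm_cpg_convex}) together with Weber's characterization to conclude the resulting vector lies in the core. Your explicit remark about the cost model (unit-cost arithmetic versus bit complexity, since $v(P)\geq 2^n$) is a point the paper glosses over and is a worthwhile clarification, but it does not change the argument.
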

\begin{proof}
Consider taking a permutation $\pi = (a_{x_1},a_{x_2},\ldots,a_{x_n})$. The marginal contribution of a player $a_{x_i}$ in this permutation is his contribution to the players appearing before him in that permutation. 

In CPGs the value of a coalition is the product of the weights of all the coalition members, which makes it very simple to compute the marginal contribution of any player in a permutation. We apply the definition of the value of a coalition in a CPG and get the following formulas for the marginal contributions of the players:

$$m^{\pi}_{x_1} = v(\{a_{x_1}\}) = w_{x_1}$$ 
$$m^{\pi}_{x_2} = v(\{a_{x_1}, a_{x_2} \}) - v(\{a_{x_1}\}) = w_{x_1} \cdot w_{x_2} - w_{x_1} = w_{x_1} \cdot (w_{x_2} - 1) $$ 
$$m^{\pi}_{x_3} = v(\{a_{x_1}, a_{x_2}, a_{x_3} \}) - v(\{a_{x_1}, a_{x_2} \}) = w_{x_1} \cdot w_{x_2} \cdot w_{x_3} - (w_{x_1}  \cdot w_{x_2}) = (w_{x_1}  \cdot w_{x_2}) \cdot (w_{x_3} - 1) $$ 
$$ \ldots $$
$$m^{\pi}_{x_i} = v(\{a_{x_1}, \ldots, a_{x_i} \}) - v(\{a_{x_1}, \ldots, a_{x_{i-1}} \}) = w_{x_1} \cdot \ldots \cdot w_{x_i} - (w_{x_1}  \cdot w_{x_{i-1}}) = (w_{x_1}  \cdot \ldots \cdot w_{x_{i-1}}) \cdot (w_{x_i} - 1) $$ 
$$ \ldots $$
$$m^{\pi}_{x_n} = v(\{a_{x_1}, \ldots, a_{x_n} \}) - v(\{a_{x_1}, \ldots, a_{x_{n-1}} \}) = w_{x_1} \cdot \ldots \cdot w_{x_n} - (w_{x_1}  \cdot w_{x_{n-1}}) = (w_{x_1}  \cdot \ldots \cdot w_{x_{n-1}}) \cdot (w_{x_i} - 1) $$ 

To compute the marginal contributions above in a permutation, we simply need to keep track of the product of all the weights until a current location, and multiply by the next weight (and the next weight minus one). 

Because CPGs are convex games (Theorem~\ref{l_thm_cpg_convex}), and due to the result by Weber~\cite{Weber77} that in convex game any marginal contribution vector is a core imputation, We get that the vector $(m^{\pi}_{x_1}, m^{\pi}_{x_2}, \ldots, m^{\pi}_{x_n})$ is an imputation in the core. Note that any permutation of the players and the above process would result in a core imputation.
\end{proof}

\section{Related Work}
\label{l_related_work}

I used the cooperative game theory framework of transferable utility games. Several books discuss cooperative game theory, and cover this and other frameworks for player cooperation~\cite{osborne1994course,moulin1995cooperative,peleg2007introduction,brandenburger2007cooperative,chalkiadakis2011computational,myerson2013game}. 

Many solutions for cooperative games were proposed. The core was defined by Gillies~\cite{oai:xtcat.oclc.org:OCLCNo/ocm19736643}. Others have proposed extensions and variants of the Core, like the least-core~\cite{shapley1966quasi} and the Nucleolus~\cite{schmeidler:1163,aumann1985game}. Much work has also been done on the relation between solving cooperative games and linear programming and optimization~\cite{owen1975core,bilbao1999core,bilbao2001shapley,bilbao2008convexity,bilbao2002some,bilbao2000dual,bilbao2000shapley,bilbao1998axioms,bilbao1998banzhaf,bachrach2009cost,meir2011subsidies}. Others defined values for games, which as opposed to the core are a single imputation rather than a set of imputations. These include the Shapley value~\cite{shapley53value,shapley_shubik1954}, the Banzhaf index~\cite{banzhaf1965}, the Deegan-Packel index~\cite{deegan1978new} and many others~\cite{owen1977values,dubey1981value,owen1982modification,weber1988probabilistic,strafiin1988shapley,ruiz1998family,algaba2000position,monderer2002variations,algaba2010value}.

One type of a cooperative game, based on weights, is weighted majority games~\cite{isbell1956class,elgot1961truth,taylor1992characterization,taylor1999simple,matsui01npcompleteness,aumann2003endogenous}. In these games every player has a weight, similarly to CPGs. However, weighted majority games have a quota, and the value of a coalition is $1$ if the sum of the player weights exceeds the quota and is $0$ if it does not. Weighted majority games are a good model for voting in decision making bodies, so they are sometimes called weighted voting games, and many papers use game theoretic solution concepts to analyzing the power of players in these games~\cite{shapley_shubik1954,allingham1975economic,lucas1983measuring,widgren1994voting,bilbao2002voting,leech2002voting,algaba2004european,algaba2007distribution,aziz2009algorithmic}. Because of this important application, many researchers worked on computerized methods to calculate solutions and power in weighted majority games and their complexity~\cite{owen:mle_banzhaf:1975,arcaini1986algorithm,prasad1990np,deng_papa_1994,matsui01npcompleteness,albert2003voting,algaba2003computing,alonso2005generating,bachrach2010approximating,bilbao2000complexity,elkind2007computational,elkind2009computational}. Some research was also done on manipulations that change solutions in weighted majority games~\cite{yokoo2005coalitional,snyder2005legislative,bachrach2008divide,zuckerman-manipulating,aziz2009false,lasisi2010false,rey2010complexity}. 
CPGs are different from weighted majority games because they use the product operator and not the sum operator, and do not have a quota. \footnote{It is possible to define a quota for CPGs similar to weighted majority games, so the value of a coalition would be $1$ if the product of the weights exceed this value and $0$ otherwise, and this is a very interesting game to study in the future.}

Many other forms of ccoperative games were proposed in the past. Some game forms are based on sets and set operations~\cite{bilbao1998closure,elkind2008coalition,elkind2008dimensionality,van1992allocation,gilles1992games,bilbao2000bicooperative,bachrach2008coalitional,branzei2009coalitional,aziz2010monotone,bachrach2010coalitional}; 
Others forms rely on a logical representation and logical formulas in various languages~\cite{IeoSho05,conf/ijcai/YokooCSOI05,harrenstein2001boolean,aagotnes2006logic,journals/ai/WooldridgeD06,IeoSho06}; 
Some models are bases on a mathematical graph structures or an optimization problem in graphs~\cite{bilbao2006note,myerson1977graphs,kalai1982totally,deng_papa_1994,deng1998combinatorial,zolezzi2002transmission,nunez2004note,resnick2009cost,bachrach2008power,bachrach2009power,bachrach2010path,han2012game,igarashi2012average}; 
There are also many other forms based on other combinatorial structures~\cite{jimenez1998valores,bilbao:2000:cooperative,algaba2001myerson,fernandez2002generating,bilbao1998values,bilbao2003cooperative,algaba2003axiomatizations}.  
Many researchers also looked at the implications of cooperative game models to many fields, such as energy, auctions voting and networks~\cite{kamboj2011deploying,bilbao2000selectope,bachrach2011coalitional,baeyens2011wind,feldman2012stability}. 

\section{Conclusion}
\label{l_conc}

I showed an interesting type of cooperative game called CPG. In this game the utility of a coalition is the product of the weights of the players in the coalition. The key question I look at is how the players would agree to share the utility they make together. To do this I used well-known solution concepts from cooperative game theory. 

I gave some results about convexity and the core of these games: all these games never have dummy players, they are always convex and it is easy to compute imputations in the core. 

However, despite these results, many questions are still open about this kind of game. What are the connections to weighted majority games? How is it possible to calculate other solution concepts from game theory in this game, such as the Shapley value or other indexes? Can this game be extended by restricting the coalitions in some way or by adding a quota for the game? 


\bibliographystyle{abbrv}
\bibliography{cpg}

\end{document}